\newtheorem{theorem}{Theorem}
\newtheorem{corollary}{Corollary}
\newtheorem{lemma}{Lemma}
\newtheorem{remark}[theorem]{Remark}
\newcommand{\cA}{\mathcal{A}}
\newcommand{\cH}{\mathcal{H}}
\newcommand{\interior}[1]{%
  {\kern0pt#1}^{\mathrm{o}}%
}
\renewcommand{\>}{\rangle}
\newcommand{\hilb}[1]{\mathcal{H}_{#1}}
\newcommand{\psd}[1]{\mathcal{S}\left( #1 \right)}
\newcommand{\pd}[1]{\mathcal{S}\left( #1 \right)_{++}}
\title{A new operator extension of strong subadditivity of quantum entropy}
\author[1,3]{Ting-Chun Lin}
\author[2]{Isaac H. Kim}
\author[3]{Min-Hsiu Hsieh}
\affil[1]{\textit{Department of Physics, University of California San Diego, CA 92093, USA}}
\affil[2]{\textit{Department of Computer Science, UC Davis,  Davis, CA 95616, USA}}
\affil[3]{\textit{Hon Hai (Foxconn) Research Institute, Taipei, Taiwan.}}
\begin{document}

\sloppy

\maketitle

\begin{abstract}
Let $S(\rho)$ be the von Neumann entropy of a density matrix $\rho$. Weak monotonicity asserts that $S(\rho_{AB}) - S(\rho_A) + S(\rho_{BC}) - S(\rho_C)\geq 0$ for any tripartite density matrix $\rho_{ABC}$, a fact that is equivalent to the strong subadditivity of entropy. We prove an operator inequality, which, upon taking an expectation value with respect to the state $\rho_{ABC}$, reduces to the weak monotonicity inequality. Generalizations of this inequality to the one involving two independent density matrices, as well as their R\'enyi-generalizations, are also presented.
\end{abstract}

\section{Introduction}
\label{sec:introduction}

In quantum mechanics, the notion of conditional probability is generally ill-defined. For example, consider an EPR pair over two qubits. The density matrix of a qubit is maximally mixed but the global state is pure. Thus, the entropy of the global state is strictly smaller than the entropy of its marginal. Examples like this show that one cannot generally ensure
$S(\rho_{AB}) - S(\rho_B)\geq 0$, where $S(\rho) \coloneqq -\text{Tr}(\rho \log \rho)$ is the von Neumann entropy of a density matrix $\rho$.  Nevertheless, the following inequality is still true:
\begin{equation}
S(\rho_{AB})- S(\rho_A) + S(\rho_{BC}) - S(\rho_C) \geq 0. \label{eq:weak_mono}
\end{equation}
This inequality is known as the \emph{weak monotonicity} in the literature. We note that weak monotonicity is equivalent to the strong subadditivity of entropy~\cite{Lieb1973}, a fact that can be shown by considering a purification of $\rho_{ABC}$.

In this paper, we prove operator extensions of Eq.~\eqref{eq:weak_mono}. Consider a tripartite system $\hilb{A}\otimes \hilb{B}\otimes \hilb{C}$. For any positive definite density matrix $\rho_{ABC}$, we show that
\begin{equation}
    \log \rho_{AB}  - \log \rho_A + \log \rho_{BC} - \log \rho_C \leq 0,\label{eq:wm_operator1}
\end{equation}
where a tensor product with the identity operator is suppressed for notational convenience. For instance, $\log \rho_{AB}$ is a short-hand notation for $\log \rho_{AB} \otimes I_C$, where $I_C$ is the identity acting on $\mathcal{H}_C$. Note that, by taking the expectation value with respect to $\rho_{ABC}$, Eq.~\eqref{eq:weak_mono} is recovered. Therefore, Eq.~\eqref{eq:wm_operator1} is an operator extension of weak monotonicity. In fact, this inequality can be extended to an inequality involving two \emph{independent} density matrices $\rho$ and $\sigma$. Let $\rho_{AB}$ and $\sigma_{BC}$ be positive definite density matrices acting on $\hilb{A}\otimes \hilb{B}$ and $\hilb{B} \otimes \hilb{C}$, respectively. We show that
\begin{equation}
    \log \rho_{AB} - \log \rho_A + \log \sigma_{BC}  - \log \sigma_C \leq 0,\label{eq:wm_operator2}
\end{equation}
again suppressing the tensor product with the identity operator.

These inequalities are somewhat surprising because $\log \rho_{AB} - \log \rho_A$ can have positive eigenvalues in general. In particular, in Eq.~\eqref{eq:wm_operator2}, we emphasize again that $\rho$ and $\sigma$ need not be related to each other in any way. While $\log \rho_{AB} - \log \rho_A$ and $\log \sigma_{BC} - \log \sigma_C$ may have positive eigenvalues, their sum, after accounting for the tensor product with the identity, apparently cannot.

The proofs of these inequalities are based on a certain operator inequality involving marginal density matrices and the fact that $f(t) = \ln t$ is an operator monotone function~\cite{Lowner1934,carlen2010trace}. We remark that this operator inequality has some resemblance to an inequality known in the algebraic quantum field theory literature~\cite{Borchers2000,witten2018aps}. However, as we discuss later in Section~\ref{sec:qft}, there are important differences between the two.

Let us make some historical remarks. Since Lieb and Ruskai's seminal proof of strong subadditivity~\cite{Lieb1973}, several strengthenings have appeared in the literature. Carlen and Lieb proved a strengthening which can become nontrivial for entangled quantum states~\cite{Carlen2012}. One of us proved an operator extension~\cite{kim2012operator,ruskai2013remarks}. A strengthening that ensures a robust form of recoverability was proved in Ref.~\cite{Fawzi2015,Wilde2015,Junge2018}. Our operator extension of weak monotonicity can be viewed as yet another strengthening of strong subadditivity. In particular, we reprove, using this new inequality, the operator extension of strong subadditivity~\cite{kim2012operator}; see Corollary~\ref{corollary:operator_extension_ssa}. Thus our new inequality is at least as strong as the operator extension of strong subadditivity.

Another perspective is that we provide an arguably simplest approach to prove strong subadditivity of entropy. Our key observation is that a nontrivial inequality can be obtained by combining Stinespring dilations~\cite{Stinespring1955} of the Accardi-Cecchini coarse graining operator~\cite{accardi1982conditional}. The proof of this inequality is elementary, and once this inequality is obtained, the weak monotonicity follows from an elementary application of L\"owner-Heinz's theorem~\cite{Lowner1934,carlen2010trace} on matrix monotone functions. The strong subadditivity then follows by introducing a purifying system, a fact that is well-known in the literature.
This observation suggests a possibility of deriving new matrix inequalities from dilations of channels.

The rest of the paper is organized as follows. The proofs of our claims (and their generalizations) are presented in Section~\ref{sec:operator_inequality}. In Section~\ref{sec:qft}, we comment on a relation between our inequalities and a similar inequality in quantum field theory. We end with a discussion in Section~\ref{sec:discussion}.

\section{Proofs}
\label{sec:operator_inequality}
Let us begin by first setting up the notation. Let $\mathcal{H}$ be a finite-dimensional Hilbert space. We denote the set of density matrices on $\mathcal{H}$ as $\psd{\mathcal{H}}$. The set of density matrices which are \emph{strictly positive} is denoted as $\pd{\mathcal{H}}$. For simplicity, throughout the paper, we focus on the cases where the density matrices are strictly positive definite. We expect a generalization of our results for positive semi-definite density matrices would require a projection onto an appropriate subspace, which we leave for future work.

Given a density matrix, we shall denote its marginals by specifying the subsystem in the subscript. For instance, $\rho_A$ is a marginal of a density matrix $\rho$ on $\hilb{A}$. We shall denote the operator norm of $M$ as $\| M \|$ and the identity acting on $\hilb{X}$ as $I_{X}$.

Here is the key lemma.
\begin{lemma}
For any $\rho_{AB} \in \pd{\hilb{A}\otimes \hilb{B}}$ and $\sigma_{BC} \in \pd{\hilb{B} \otimes \hilb{C}}$,
\begin{equation}
    \rho_A^{-1}\otimes \sigma_{BC} \leq \rho_{AB}^{-1} \otimes \sigma_C.
\end{equation}
\label{lemma:key}
\end{lemma}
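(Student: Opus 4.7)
The plan is to reduce the claim to a contraction estimate by a sequence of positive conjugations, and then to establish the contraction using Stinespring dilations of the Accardi--Cecchini conditional expectations associated with $\rho_{AB}$ and $\sigma_{BC}$.

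First I would conjugate both sides of the desired inequality by successive positive invertible operators. Conjugation by $I_{AB}\otimes\sigma_C^{-1/2}$ converts the goal into $\rho_A^{-1}\otimes K_\sigma\le \rho_{AB}^{-1}\otimes I_C$, where $K_\sigma:=\sigma_C^{-1/2}\sigma_{BC}\sigma_C^{-1/2}\in\pd{\hilb{B}\otimes\hilb{C}}$ has $\Tr_B K_\sigma=I_C$. A further conjugation by $\rho_A^{1/2}\otimes I_{BC}$ yields $I_A\otimes K_\sigma\le K_\rho^{-1}\otimes I_C$, where $K_\rho:=\rho_A^{-1/2}\rho_{AB}\rho_A^{-1/2}\in\pd{\hilb{A}\otimes\hilb{B}}$ has $\Tr_B K_\rho=I_A$. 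A final conjugation by $K_\rho^{1/2}\otimes I_C$ reduces the problem to
\[
K_\rho^{1/2}\,K_\sigma\,K_\rho^{1/2}\le I_{\hilb{A}\otimes\hilb{B}\otimes\hilb{C}},
\]
equivalently $\|K_\rho^{1/2}K_\sigma^{1/2}\|\le 1$, where the two factors are promoted from $\hilb{A}\otimes\hilb{B}$ and $\hilb{B}\otimes\hilb{C}$ to $\hilb{A}\otimes\hilb{B}\otimes\hilb{C}$ by tensoring with the relevant identities.

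To establish this contraction I would invoke the Accardi--Cecchini conditional expectation $E_\rho\colon\mathcal{B}(\hilb{A}\otimes\hilb{B})\to\mathcal{B}(\hilb{B})$ given by $E_\rho(X)=\rho_B^{-1/2}\Tr_A(\rho_{AB}^{1/2}X\rho_{AB}^{1/2})\rho_B^{-1/2}$, which is unital and completely positive; one checks unitality via $E_\rho(I)=\rho_B^{-1/2}(\Tr_A\rho_{AB})\rho_B^{-1/2}=I_B$. A Kraus decomposition in an orthonormal basis $\{|a\rangle_A\}$ produces an explicit Stinespring isometry $V_\rho\colon\hilb{B}\to\hilb{A}\otimes\hilb{B}\otimes\hilb{E}_\rho$ with $\hilb{E}_\rho\cong\hilb{A}$, namely
\[
V_\rho|\phi\rangle_B=(\rho_{AB}^{1/2}\otimes I_{E_\rho})\paren{|\Omega\rangle_{A,E_\rho}\otimes \rho_B^{-1/2}|\phi\rangle_B},
\]
where $|\Omega\rangle_{A,E_\rho}=\sum_a|a\rangle_A|a\rangle_{E_\rho}$, and one verifies $V_\rho^*V_\rho=I_B$ directly using $\Tr_A\rho_{AB}=\rho_B$. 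An analogous isometry $V_\sigma\colon\hilb{B}\to\hilb{B}\otimes\hilb{C}\otimes\hilb{E}_\sigma$ is defined for $\sigma_{BC}$, with $\hilb{E}_\sigma\cong\hilb{C}$. Since $V_\rho$ and $V_\sigma$ are isometries sharing the common source $\hilb{B}$, the composite $V_\rho V_\sigma^*\colon\hilb{B}\otimes\hilb{C}\otimes\hilb{E}_\sigma\to\hilb{A}\otimes\hilb{B}\otimes\hilb{E}_\rho$ has operator norm at most $1$.

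The concluding step is to show that, after identifying $\hilb{E}_\rho\cong\hilb{A}$ and $\hilb{E}_\sigma\cong\hilb{C}$ via the chosen bases and carrying out the contraction against the maximally entangled vectors built into the two isometries, the operator $V_\rho V_\sigma^*$ realizes (or controls the norm of) $K_\rho^{1/2}K_\sigma^{1/2}$ on $\hilb{A}\otimes\hilb{B}\otimes\hilb{C}$; this immediately gives $\|K_\rho^{1/2}K_\sigma^{1/2}\|\le 1$ and hence the lemma.

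I expect the main obstacle to be in the last step: the algebraic bookkeeping needed to identify the ancilla spaces $\hilb{E}_\rho,\hilb{E}_\sigma$ with copies of $\hilb{A},\hilb{C}$ and to see that the resulting operator on $\hilb{A}\otimes\hilb{B}\otimes\hilb{C}$ is indeed $K_\rho^{1/2}K_\sigma^{1/2}$. The shared subsystem $\hilb{B}$ appears as the domain of both Stinespring isometries, and because $\rho_{AB}^{1/2}$ and $\sigma_{BC}^{1/2}$ do not commute on $\hilb{B}$, the identification is not a routine commutation; it will require a careful matrix-element computation and a judicious choice of how the ancilla copies of $\hilb{A}$ and $\hilb{C}$ are glued back into the target via the unnormalized maximally entangled vectors.
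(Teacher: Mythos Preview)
Your conjugation reductions are correct, and your high-level strategy --- build Stinespring isometries for two Accardi--Cecchini maps and compose them to get a contraction --- is exactly the paper's approach. But the specific isometries you chose do not deliver the last step, and the obstacle you flag in your final paragraph is in fact fatal for this choice.

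The issue is the \emph{direction} of the Accardi--Cecchini maps. You take $E_\rho:\cB(\hilb{A}\otimes\hilb{B})\to\cB(\hilb{B})$, so your $V_\rho$ has domain $\hilb{B}$ and carries the factor $\rho_B^{-1/2}$; likewise $V_\sigma$ has domain $\hilb{B}$ and carries $\sigma_B^{-1/2}$. Consequently $V_\rho V_\sigma^*$ factors through $\hilb{B}$ and has rank at most $\dim\hilb{B}$, so it cannot coincide with (or dominate in norm) the invertible operator $K_\rho^{1/2}K_\sigma^{1/2}$ on $\hilb{A}\otimes\hilb{B}\otimes\hilb{C}$. A direct computation gives
\[
V_\rho V_\sigma^*=\sum_{a,c}\rho_{AB}^{1/2}\Bigl(|a\rangle_A\otimes\rho_B^{-1/2}\sigma_B^{-1/2}\,\langle c|_C\,\sigma_{BC}^{1/2}\Bigr)\otimes|a\rangle_{E_\rho}\langle c|_{E_\sigma},
\]
which involves the wrong marginals $\rho_B,\sigma_B$ (the lemma features $\rho_A,\sigma_C$), and no identification of $\hilb{E_\rho}\cong\hilb{A}$, $\hilb{E_\sigma}\cong\hilb{C}$ can fix the mismatch of source and target spaces.

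The remedy is to reverse the direction: use the Accardi--Cecchini maps into $\cB(\hilb{A})$ and $\cB(\hilb{C})$. Their Stinespring isometries are
\[
V^{\rho}:\hilb{A}\to\hilb{A}\otimes\hilb{B}\otimes\hilb{B^*},\qquad V^{\rho}=\rho_{AB}^{1/2}\rho_A^{-1/2}\sum_k|k\rangle_B|k\rangle_{B^*},
\]
and analogously $V^{\sigma}:\hilb{C}\to\hilb{B}\otimes\hilb{B^*}\otimes\hilb{C}$. Now the two isometries share the \emph{ancilla} $\hilb{B^*}$ rather than the domain $\hilb{B}$; contracting over $\hilb{B^*}$ (this is the paper's calculation) produces the operator $\bigl(\rho_{AB}^{1/2}\otimes\sigma_C^{-1/2}\bigr)\bigl(\rho_A^{-1/2}\otimes\sigma_{BC}^{1/2}\bigr)$ on $\hilb{A}\otimes\hilb{B}\otimes\hilb{C}$ as a composition of (co)isometries, hence of norm at most $1$. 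This is $PQ$ with $P^*P=K_\rho$ and $QQ^*=K_\sigma$, not $K_\rho^{1/2}K_\sigma^{1/2}$ itself, but $\|PQ\|\le 1$ is equivalent to $K_\rho\le K_\sigma^{-1}$ and hence to your target $\|K_\rho^{1/2}K_\sigma^{1/2}\|\le 1$.
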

\begin{proof}
Let $\rho_{AB} \in \pd{\hilb{A} \otimes \hilb{B}}$. Consider an operator $V_{A\to ABB^{*}}^{\rho}: \mathcal{H}_A \to \hilb{A} \otimes \hilb{B} \otimes \hilb{B^*} $ defined as follows:
\begin{equation}
    V_{A\to ABB^{*}}^{\rho}\coloneqq \rho_{AB}^{\frac{1}{2}} \rho_A^{-\frac{1}{2}}\sum_{k} |k\rangle_B |k\rangle_{B^*},
\end{equation}
where $\hilb{B^*}$ is an auxiliary Hilbert space such that $\dim\left(\hilb{B}^*\right) = \dim \left(\hilb{B} \right)$, and the summation is taken over a set of orthonormal basis for $\hilb{B^*}$ and $\hilb{B}$. A straightforward calculation shows that ${V_{A\to ABB^{*}}^{\rho}}^{\dagger}V_{A\to ABB^{*}}^{\rho} = \text{Tr}_B (\rho_A^{-\frac{1}{2}}  \rho_{AB} \rho_A^{-\frac{1}{2}})=I_A$. (We remark that, more generally, if $X$ is any operator on $\mathcal{H}_A\otimes \mathcal{H}_B$ and $\hat{X}$ is an operator acting on $\mathcal{H}_A$ such that $\hat{X}|\psi\rangle = \sum_k X|\psi\rangle \otimes |k\rangle_B \otimes |k\rangle_{B^*}$, then $\hat{X}^{\dagger}\hat{X} = \text{Tr}_B(X^{\dagger} X)$.) Thus, $V_{A\to ABB^{*}}^{\rho}$ is an isometry. Similarly, we can define
\begin{equation}
    V_{C\to BB^{*}C}^{\sigma}\coloneqq \sigma_{BC}^{\frac{1}{2}} \sigma_C^{-\frac{1}{2}}\sum_{k} |k\rangle_B |k\rangle_{B^*},
\end{equation}
which is also an isometry.

Let $V_{B\to B'}: \hilb{B} \to \hilb{B'}$ be an isometry, where $\hilb{B'}$ is an auxiliary Hilbert space we use in the following argument. Define $V_{A\to AB'B^*}^{\rho}: \hilb{A} \otimes \hilb{C} \to \hilb{A}\otimes \hilb{B'} \otimes \hilb{B^*} \otimes \hilb{C}$ as follows:
\begin{equation}
    V_{A\to AB'B^*}^{\rho} \coloneqq V_{B\to B'}V_{A\to ABB^*}^{\rho}.
\end{equation}
Consider the operator $(I_A \otimes V_{B\to B'}^{\dagger} \otimes I_C) (I_A \otimes I_{B'} \otimes {V^{\sigma}_{C\to BB^*C}}^{\dagger}) (V^{\rho}_{A\to AB'B^*} \otimes I_B \otimes I_C): \hilb{A} \otimes \hilb{B} \otimes \hilb{C} \to \hilb{A} \otimes \hilb{B} \otimes \hilb{C}$. Since the operator norm of an isometry is $1$, we conclude
\begin{equation}
    \left\|(I_A \otimes V_{B\to B'}^{\dagger} \otimes I_C) (I_A \otimes I_{B'} \otimes {V^{\sigma}_{C\to BB^*C}}^{\dagger}) (V^{\rho}_{A\to AB'B^*} \otimes I_B \otimes I_C) \right\|\leq 1.
\end{equation}
A straightforward calculation shows that
\begin{equation}
\begin{aligned}
    &\phantom{{}={}} (I_A \otimes V_{B\to B'}^{\dagger} \otimes I_C) (I_A \otimes I_{B'} \otimes {V^{\sigma}_{C\to BB^*C}}^{\dagger}) (V^{\rho}_{A\to AB'B^*} \otimes I_B \otimes I_C) \\
    &= \sum_{k, k', k'', k'''}|k'''\rangle_{B}\langle k'''|_{B'} \langle k''|_{B} \langle k''|_{B^*} \sigma_{C}^{-\frac{1}{2}} \sigma_{BC}^{\frac{1}{2}} |k'\rangle_{B'}\langle k'|_{B}\rho_{AB}^{\frac{1}{2}}\rho_A^{-\frac{1}{2}} |k\rangle_B|k\rangle_{B^*} \\
    &= \sum_{k, k', k''}|k'\rangle_{B} \langle k''|_{B} \langle k''|_{B^*} \sigma_{C}^{-\frac{1}{2}} \sigma_{BC}^{\frac{1}{2}} \langle k'|_{B}\rho_{AB}^{\frac{1}{2}}\rho_A^{-\frac{1}{2}} |k\rangle_B|k\rangle_{B^*} \\
    &= \sum_{k, k'}|k'\rangle_{B} \langle k|_{B}  \sigma_{C}^{-\frac{1}{2}} \sigma_{BC}^{\frac{1}{2}} \langle k'|_{B}\rho_{AB}^{\frac{1}{2}}\rho_A^{-\frac{1}{2}} |k\rangle_B
    \\&= \left(\rho_{AB}^{\frac{1}{2}} \otimes \sigma_C^{-\frac{1}{2}}\right) \left(\rho_A^{-\frac{1}{2}} \otimes \sigma_{BC}^{\frac{1}{2}}\right).
\end{aligned}
\label{eq:calculation}
\end{equation}
We therefore obtain
\begin{equation}
\left\| \left(\rho_{AB}^{\frac{1}{2}} \otimes \sigma_C^{-\frac{1}{2}}\right) \left(\rho_A^{-\frac{1}{2}}\otimes \sigma_{BC}^{\frac{1}{2}} \right) \right\|\leq 1
\end{equation}
leading to
\begin{equation}
\left(\rho_A^{-\frac{1}{2}}\otimes \sigma_{BC}^{\frac{1}{2}} \right)   \left( \rho_{AB} \otimes \sigma_C^{-1}\right) \left(\rho_A^{-\frac{1}{2}}\otimes \sigma_{BC}^{\frac{1}{2}} \right)  \leq I_{ABC},
\end{equation}
from which the main claim immediately follows.
\end{proof}

A crucial step of the proof is Eq.~\eqref{eq:calculation}, and here we provide a diagrammatic version of this argument, in order to provide an intuition to the readers. We shall first denote $V_{A\to ABB^*}^{\rho}$ and $V_{C\to BB^*C}^{\sigma}$ as follows:
\begin{equation}
    V_{A\to ABB^*}^{\rho} =
    \begin{tikzpicture} [scale=0.75, baseline={([yshift=-.5ex]current bounding box.center)}]
    \draw (0,-1) rectangle node{$\rho_{A}^{-\frac{1}{2}}$} (1,-2) (1.5,0.5) rectangle node{$\rho_{AB}^{\frac{1}{2}}$} (2.5,-2);
    \begin{pgfinterruptboundingbox}
        \begin{scope}[even odd rule]
        \clip (0,-1) rectangle (1,-2) (1.5,0.5) rectangle (2.5,-2) (-2,3) rectangle (4.5,-3);
        \draw (0.75,0) -- (3.5,0) (-1,-1.5) -- (3.5,-1.5);
        \end{scope}
    \end{pgfinterruptboundingbox}
    \draw[out=180,in=180] (0.75,0) to (0.75,1.5);
    \draw (0.75,1.5) -- (3.5,1.5);

    \draw (-0.75,-1.25)node{$A$} (3.25,-1.25)node{$A$} (3.25,0.25)node{$B$} (1,0.25) node{$B$} (1.05,1.75) node{$B^*$};
\end{tikzpicture},
\qquad \qquad
V_{C\to BB^*C}^{\sigma} =
\begin{tikzpicture}[scale=0.75, baseline={([yshift=-.5ex]current bounding box.center)}]
    \draw (0,1) rectangle node{$\sigma_{C}^{-\frac{1}{2}}$} (1,2) (1.5,-0.5) rectangle node{$\sigma_{BC}^{\frac{1}{2}}$} (2.5,2);
    \begin{pgfinterruptboundingbox}
        \begin{scope}[even odd rule]
        \clip (0,1) rectangle (1,2) (1.5,-0.5) rectangle (2.5,2) (-2,-3) rectangle (4.5,3);
        \draw (0.75,0) -- (3.5,0) (-1,1.5) -- (3.5,1.5);
        \end{scope}
    \end{pgfinterruptboundingbox}
    \draw[out=180,in=180] (0.75,0) to (0.75,-1.5);
    \draw (0.75,-1.5) -- (3.5,-1.5);

    \draw (-0.75,1.75)node{$C$} (3.25,1.75)node{$C$} (3.25,0.25)node{$B$} (1.05,-1.25)node{$B^*$} (1,0.25)node{$B$};
\end{tikzpicture}.
\end{equation}
In these diagrams, the input and the output of the maps lie on the left and the right side of the diagram, respectively. Each leg is labeled by the respective subsystem, and the curved leg connecting $B$ and $B^*$ represents $\sum_{k}|k\rangle_B |k\rangle_{B^*}$.

We can similarly represent $V_{A\to AB'B^*}^{\rho}$ as follows:
\begin{equation}
V_{A\to AB'B^*}^{\rho} =
    \begin{tikzpicture} [scale=0.75, baseline={([yshift=-.5ex]current bounding box.center)}]
    \draw (0,-1) rectangle node{$\rho_{A}^{-\frac{1}{2}}$} (1,-2) (1.5,0.5) rectangle node{$\rho_{AB}^{\frac{1}{2}}$} (2.5,-2);
    \begin{pgfinterruptboundingbox}
        \begin{scope}[even odd rule]
        \clip (0,-1) rectangle (1,-2) (1.5,0.5) rectangle (2.5,-2) (-2,3) rectangle (4.5,-3);
        \draw (0.75,0) -- (4.5,0) (-1,-1.5) -- (4.5,-1.5);
        \end{scope}
    \end{pgfinterruptboundingbox}
    \draw[out=180,in=180] (0.75,0) to (0.75,1.5);
    \draw (0.75,1.5) -- (4.5,1.5);

    \draw (-0.75,-1.25)node{$A$} (4.25,-1.25)node{$A$} (3.25,0.25)node{$B$} (1,0.25) node{$B$} (1.05,1.75) node{$B^*$};

    \draw (4.25, 0.3) node {$B'$};
    \node[fill=black, regular polygon, regular polygon sides=3, minimum size=0.35cm, inner sep=0pt, rotate=270] () at (3.75, 0) {};

\end{tikzpicture},
\end{equation}
where the triangle corresponds to $V_{B\to B'}$ and the curved leg connecting $B$ and $B^*$ is now $\sum_k \langle k|_B \langle k|_{B^*}$. We can thus obtain
\begin{equation}
\begin{aligned}
    (I_A \otimes I_{B'} \otimes {V^{\sigma}_{C\to BB^*C}}^{\dagger}) (V^{\rho}_{A\to AB'B^*} \otimes I_B \otimes I_C)&=
    \begin{tikzpicture} [scale=0.75, baseline={([yshift=-.5ex]current bounding box.center)}]
    \draw (0,-1) rectangle node{$\rho_{A}^{-\frac{1}{2}}$} (1,-2) (1.5,0.5) rectangle node{$\rho_{AB}^{\frac{1}{2}}$} (2.5,-2);
    \begin{pgfinterruptboundingbox}
        \begin{scope}[even odd rule]
        \clip (0,-1) rectangle (1,-2) (1.5,0.5) rectangle (2.5,-2) (-2,3) rectangle (4.5,-3);
        \draw (0.75,0) -- (4.5,0) (-1,-1.5) -- (4.5,-1.5);
        \end{scope}
    \end{pgfinterruptboundingbox}
    \draw[out=180,in=180] (0.75,0) to (0.75,1.5);
    \draw (0.75,1.5) -- (4.5,1.5);

    \draw (-0.75,-1.25)node{$A$} (4.25,-1.25)node{$A$} (3.25,0.25)node{$B$} (1,0.25) node{$B$} (1.1,1.75) node{$B^*$};

    \draw (4.25, 0.3) node {$B'$};
    \node[fill=black, regular polygon, regular polygon sides=3, minimum size=0.35cm, inner sep=0pt, rotate=270] () at (3.75, 0) {};
    \begin{scope}[xshift=8cm, xscale=-1, yshift=3cm]
    \draw (0,1) rectangle node{$\sigma_{C}^{-\frac{1}{2}}$} (1,2) (1.5,-0.5) rectangle node{$\sigma_{BC}^{\frac{1}{2}}$} (2.5,2);
    \begin{pgfinterruptboundingbox}
        \begin{scope}[even odd rule]
        \clip (0,1) rectangle (1,2) (1.5,-0.5) rectangle (2.5,2) (-2,-3) rectangle (4.5,3);
        \draw (0.75,0) -- (3.5,0) (-1,1.5) -- (3.5,1.5);
        \end{scope}
    \end{pgfinterruptboundingbox}
    \draw[out=180,in=180] (0.75,0) to (0.75,-1.5);
    \draw (0.75,-1.5) -- (3.5,-1.5);

    \draw (-0.75,1.75)node{$C$} (3.25,1.75)node{$C$} (3.25,0.25)node{$B$} (0.95,-1.25)node{$B^*$} (1,0.25)node{$B$};
    \end{scope}
\end{tikzpicture}
\\[15pt]
&=
\begin{tikzpicture} [scale=0.75, baseline={([yshift=-.5ex]current bounding box.center)}]
\node[fill=black, regular polygon, regular polygon sides=3, minimum size=0.35cm, inner sep=0pt, rotate=270] () at (3.5, 0) {};
    \draw (-0.75, -1.25) node {$A$} (-0.75, 0.25) node {$B$} (-0.75, 1.75) node {$C$};
    \draw (3, -1.25) node {$A$} (3, 0.25) node {$B$} (3, 1.75) node {$C$};
    \draw (4, 0.27) node {$B'$};
    \draw[] (-1, 1.5) -- ++ (5.5,0);
    \draw[] (-1, 0) -- ++ (5.5,0);
    \draw[] (-1, -1.5) -- ++ (5.5,0);
    \draw[fill=white] (0,2) rectangle (1, -0.5);
    \draw[fill=white] (1.5, 2) rectangle (2.5, 1);
    \draw[fill=white] (0, -1) rectangle (1, -2);
    \draw[fill=white] (1.5, 0.5) rectangle (2.5, -2);
    \draw (0,2) rectangle node{$\sigma_{BC}^{\frac{1}{2}}$} (1,-0.5)
                  (1.5,2) rectangle node{$\sigma_{C}^{-\frac{1}{2}}$} (2.5,1)
                  (0,-1) rectangle node{$\rho_{A}^{-\frac{1}{2}}$} (1,-2)
                  (1.5,0.5) rectangle node{$\rho_{AB}^{\frac{1}{2}}$} (2.5,-2);

\end{tikzpicture},
\end{aligned}
\end{equation}
where the second line is obtained by simply ``straigtening out'' the curved leg. At this point, it is straightforward to see that $\left(\rho_{AB}^{\frac{1}{2}} \otimes \sigma_C^{-\frac{1}{2}}\right) \left(\rho_A^{-\frac{1}{2}} \otimes \sigma_{BC}^{\frac{1}{2}}\right)$ can be obtained by applying the inverse of $V_{B\to B'}^{\dagger}$, which completes the argument.

\begin{remark}
The isometry $V^{\rho}_{A\to ABB^*}$ is the Stinespring dilation~\cite{Stinespring1955} of the Accardi-Cecchini coarse graining operator~\cite{accardi1982conditional}.
\end{remark}

By the L\"owner-Heinz theorem~\cite{Lowner1934,carlen2010trace}, $f(t) = \log t$ is operator monotone. Thus, we immediately obtain the following result.
\begin{theorem}
\label{thm:main}
    For any $\rho_{AB} \in \pd{\hilb{A} \otimes \hilb{B}}$ and $\sigma_{BC} \in \pd{\hilb{B} \otimes \hilb{C}}$,
    \begin{equation}
        \log \rho_{AB} - \log \rho_A + \log \sigma_{BC}  - \log \sigma_C \leq 0.
    \end{equation}
\end{theorem}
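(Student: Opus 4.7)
The plan is a short two-line reduction from Lemma~\ref{lemma:key} together with the operator monotonicity of the logarithm. First I would invoke Lemma~\ref{lemma:key}, which gives the positive-operator inequality
\begin{equation}
    \rho_A^{-1}\otimes \sigma_{BC} \;\leq\; \rho_{AB}^{-1}\otimes \sigma_C
\end{equation}
as operators on $\hilb{A}\otimes\hilb{B}\otimes\hilb{C}$ (tensoring with the appropriate identities). Both sides are strictly positive definite since $\rho_{AB}$ and $\sigma_{BC}$ are, so the logarithms below are well-defined.

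Next I would apply L\"owner--Heinz to the operator monotone function $f(t)=\log t$ (cf.~\cite{Lowner1934,carlen2010trace}), which preserves the inequality, yielding
\begin{equation}
    \log\bigl(\rho_A^{-1}\otimes \sigma_{BC}\bigr) \;\leq\; \log\bigl(\rho_{AB}^{-1}\otimes \sigma_C\bigr).
\end{equation}

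Then I would use the elementary identity $\log(X\otimes Y)=\log X \otimes I + I\otimes \log Y$, valid whenever $X,Y$ are strictly positive on disjoint tensor factors (the two operands commute, so the functional calculus splits). On the left this gives $-\log\rho_A + \log\sigma_{BC}$ and on the right $-\log\rho_{AB} + \log\sigma_C$, under the same convention of suppressing identities as in the statement. Rearranging produces exactly
\begin{equation}
    \log \rho_{AB} - \log \rho_A + \log \sigma_{BC} - \log \sigma_C \;\leq\; 0,
\end{equation}
which is the claimed inequality.

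There is essentially no obstacle beyond citing the right black boxes; the only point that deserves a line of care is verifying that the split-log identity applies, which it does because $\rho_A^{-1}$ and $\sigma_{BC}$ (respectively $\rho_{AB}^{-1}$ and $\sigma_C$) act on complementary tensor factors after the identities are inserted, so they commute and are simultaneously diagonalizable. Thus the theorem follows immediately once Lemma~\ref{lemma:key} is established, and the entire content of the argument is concentrated in the isometry construction used to prove that lemma.
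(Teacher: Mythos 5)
Your argument is correct and matches the paper's proof: both apply Lemma~\ref{lemma:key} and then use the L\"owner--Heinz operator monotonicity of $\log$, with the tensor-product splitting of the logarithm giving the stated rearrangement. Nothing further is needed.
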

\noindent
We remark that, by taking $\rho_{AB}$ and $\sigma_{BC}$ as the marginal density matrices of $\rho_{ABC}$, Eq.~\eqref{eq:wm_operator1} follows. Moreover, by taking an expectation value with respect to $\rho_{ABC}$, weak monotonicity, and subsequently, the strong subadditivity of entropy~\cite{Lieb1973} follows as well.

Moreover, Theorem~\ref{thm:main} implies the operator extension of strong subadditivity~\cite{kim2012operator,ruskai2013remarks}.
\begin{corollary}
\label{corollary:operator_extension_ssa}
    For any $\rho_{ABC} \in \pd{\hilb{A} \otimes \hilb{B} \otimes \hilb{C}}$,
    \begin{equation}
        \textnormal{Tr}_{BC}\left(\rho_{ABC} (\log \rho_{ABC} + \log \rho_B - \log \rho_{AB} - \log \rho_{BC}) \right) \geq 0.
    \end{equation}
\end{corollary}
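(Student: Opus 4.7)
The plan is to reduce Corollary~\ref{corollary:operator_extension_ssa} to Theorem~\ref{thm:main} by purifying $\rho_{ABC}$ and using the standard identity relating complementary marginals of a pure state. Let $|\psi\rangle_{ABCR}$ be a purification of $\rho_{ABC}$ on an auxiliary system $R$.

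First, I apply Theorem~\ref{thm:main} to the two marginals $\rho_{RC}$ and $\rho_{CB}$ of $|\psi\rangle\langle\psi|_{ABCR}$, with $C$ playing the role of the shared middle subsystem (i.e.\ the substitution $(A,B,C) \mapsto (R,C,B)$ in the theorem's notation). This delivers
\begin{equation*}
K \;\coloneqq\; -\log\rho_{RC} + \log\rho_R - \log\rho_{BC} + \log\rho_B \;\geq\; 0
\end{equation*}
as an operator on $\hilb{R} \otimes \hilb{C} \otimes \hilb{B}$ that acts trivially on $\hilb{A}$.

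Second, I rewrite the left-hand side of the corollary as a partial trace over $BCR$:
\begin{equation*}
\Tr_{BC}\!\bigl(\rho_{ABC}\,M\bigr) \;=\; \Tr_{BCR}\!\bigl(|\psi\rangle\langle\psi|\,(M \otimes I_R)\bigr), \qquad M \coloneqq \log\rho_{ABC} + \log\rho_B - \log\rho_{AB} - \log\rho_{BC},
\end{equation*}
and invoke the identity $(f(\rho_X) \otimes I_{X^c})\,|\psi\rangle = (I_X \otimes f(\rho_{X^c}))\,|\psi\rangle$, valid for any bipartition $X \,|\, X^c$ of $ABCR$ and any real function $f$; it follows from a one-line Schmidt-decomposition computation. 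Applying this with $X = ABC$ sends $\log\rho_{ABC}|\psi\rangle \mapsto \log\rho_R|\psi\rangle$, and applying it with $X = AB$ sends $\log\rho_{AB}|\psi\rangle \mapsto \log\rho_{RC}|\psi\rangle$. A direct term-by-term comparison then gives $(M \otimes I_R)\,|\psi\rangle = K\,|\psi\rangle$.

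Finally, since $K$ is supported only on $BCR$, the usual cyclicity of the partial trace with respect to operators acting on the traced-out subsystems yields
\begin{equation*}
\Tr_{BCR}\!\bigl(|\psi\rangle\langle\psi|\,K\bigr) \;=\; \Tr_{BCR}\!\bigl(K^{1/2}\,|\psi\rangle\langle\psi|\,K^{1/2}\bigr) \;\geq\; 0,
\end{equation*}
since the right-hand side is the partial trace over $BCR$ of a positive rank-one operator on $ABCR$, and hence is a positive semidefinite operator on $\hilb{A}$. Chaining the equalities then proves the corollary. The only delicate step I anticipate is the bookkeeping needed to establish $(M \otimes I_R)|\psi\rangle = K|\psi\rangle$; once the purification identity is applied cleanly to each of the four terms of $M$, the remainder is an elementary positivity argument, and $K \geq 0$ itself is supplied for free by Theorem~\ref{thm:main}.
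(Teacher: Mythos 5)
Your proposal is correct and takes essentially the same route as the paper: purify $\rho_{ABC}$, apply Theorem~\ref{thm:main} to the marginals $\rho_{BC}$ and $\rho_{CR}$ of the purification, and use the identities $\log\rho_{AB}|\psi\rangle=\log\rho_{RC}|\psi\rangle$ and $\log\rho_{ABC}|\psi\rangle=\log\rho_{R}|\psi\rangle$. The only cosmetic difference is the final positivity step, where the paper tests against arbitrary operators $M_A$ on $\hilb{A}$ while you sandwich with $K^{1/2}$ and invoke positivity of the partial trace of $K^{1/2}|\psi\rangle\langle\psi|K^{1/2}$; these are equivalent ways of upgrading the scalar statement to operator positivity on $\hilb{A}$.
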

\begin{proof}
Consider a purification of $\rho_{ABC}$, denoted as $|\rho\rangle_{ABCD}$, where $D$ is the purifying space. By Theorem~\ref{thm:main},
\begin{equation}
    \log \rho_{BC} - \log \rho_B + \log \rho_{CD}  - \log \rho_D \leq 0.
\end{equation}
For any $M_A$ acting on $\hilb{A}$,
\begin{equation}
    \langle \rho| (M_A^{\dagger}\otimes I_{BCD})(\log \rho_{BC} + \log \rho_{CD} - \log \rho_B - \log \rho_D) (M_A\otimes I_{BCD}) |\rho\rangle \leq 0.
\end{equation}
Since $\log \rho_{CD} |\rho\rangle = \log \rho_{AB} |\rho\rangle$ and $\log \rho_D |\rho\rangle = \log \rho_{ABC} |\rho\rangle$, we get
\begin{equation}
\text{Tr}_A\left(M_A^{\dagger}M_A\text{Tr}_{BC}\left(\rho_{ABC} (\log \rho_{ABC} + \log \rho_B - \log \rho_{AB} - \log \rho_{BC}) \right) \right) \geq 0,
\end{equation}
which implies the claim.
\end{proof}

The L\"owner-Heinz theorem also implies that $f(t) = t^{\alpha}$ is operator monotone for $\alpha\in [0, 1]$. Thus, the following theorem also follows, which can be viewed as a R\'enyi generalization of Theorem~\ref{thm:main}.
\begin{theorem}
    For any $\rho_{AB} \in \pd{\hilb{A} \otimes \hilb{B}}$ and $\sigma_{BC} \in \pd{\hilb{B} \otimes \hilb{C}}$,
    \begin{equation}
        \rho_{A}^{-\alpha} \otimes \sigma_{BC}^{\alpha} \leq \rho_{AB}^{-\alpha} \otimes \sigma_C^{\alpha},
    \end{equation}
    for $\alpha \in [0, 1]$.
\end{theorem}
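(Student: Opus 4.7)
The plan is to derive this Rényi version in exactly the same way Theorem~\ref{thm:main} was derived from Lemma~\ref{lemma:key}, only replacing the operator monotone function $f(t)=\log t$ by $f(t)=t^{\alpha}$. By the L\"owner--Heinz theorem, $t\mapsto t^{\alpha}$ is operator monotone on $(0,\infty)$ for every $\alpha\in[0,1]$, so applying it to both sides of any inequality between strictly positive operators preserves the inequality.

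Concretely, I would first invoke Lemma~\ref{lemma:key} to obtain
\begin{equation*}
    \rho_A^{-1}\otimes \sigma_{BC} \leq \rho_{AB}^{-1} \otimes \sigma_C.
\end{equation*}
Since $\rho_{AB}$ and $\sigma_{BC}$ lie in $\pd{\cdot}$, both sides are strictly positive, so raising to the $\alpha$-th power is well-defined and, by L\"owner--Heinz, order-preserving:
\begin{equation*}
    \left(\rho_A^{-1}\otimes \sigma_{BC}\right)^{\alpha} \leq \left(\rho_{AB}^{-1} \otimes \sigma_C\right)^{\alpha}.
\end{equation*}
The remaining step is the identity $(X\otimes Y)^{\alpha}=X^{\alpha}\otimes Y^{\alpha}$ for strictly positive $X,Y$ acting on separate tensor factors. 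This is immediate: lifted to the full space, $X\otimes I$ and $I\otimes Y$ are commuting strictly positive operators, so $(X\otimes Y)^{\alpha}=(X\otimes I)^{\alpha}(I\otimes Y)^{\alpha}=X^{\alpha}\otimes Y^{\alpha}$ (equivalently, the joint spectral decomposition has eigenvalues that are products of the factorwise eigenvalues). Applying this to both sides yields
\begin{equation*}
    \rho_A^{-\alpha}\otimes \sigma_{BC}^{\alpha} \leq \rho_{AB}^{-\alpha}\otimes \sigma_C^{\alpha},
\end{equation*}
as claimed.

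There is no real obstacle here: given Lemma~\ref{lemma:key}, the entire argument is a one-line application of operator monotonicity together with the factorization of fractional powers across a tensor product. The boundary cases are consistent sanity checks: $\alpha=1$ recovers Lemma~\ref{lemma:key} verbatim, while $\alpha\to 0$ degenerates to the trivial bound $I_{ABC}\leq I_{ABC}$, so all the nontrivial content sits in the open interval $\alpha\in(0,1)$ where L\"owner--Heinz is genuinely used.
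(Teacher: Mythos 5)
Your proposal is correct and matches the paper's argument exactly: the paper also obtains this theorem by applying the operator monotone function $t\mapsto t^{\alpha}$ (L\"owner--Heinz, $\alpha\in[0,1]$) to Lemma~\ref{lemma:key}. Your explicit justification of $(X\otimes Y)^{\alpha}=X^{\alpha}\otimes Y^{\alpha}$ is a detail the paper leaves implicit, but it is the same route.
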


\section{A related inequality from algebraic quantum field theory}
\label{sec:qft}

We remark that there is a known result in the algebraic quantum field theory literature~\cite{Borchers2000} which appears similar to Lemma~\ref{lemma:key}. This result dates back to the work of Petz~\cite{petz1986quasi}, which was used to prove the data processing inequality. Here we introduce this result and comment on this similarity. (An introduction to von Neumann algebra and related concepts can be found in Ref.~\cite{witten2018aps}.) Let $\hilb{}$ be a Hilbert space. Let $|\Psi\rangle\in \mathcal{H}$ be a cyclic and separating vector for a von Neumann algebra $\mathcal{A}$ on $\mathcal{H}$. Let $|\Psi\rangle, |\Phi\rangle\in \mathcal{H}$ be vectors for a von Neumann algebra $\mathcal{A}$ on $\mathcal{H}$ where $|\Psi\rangle$ is cyclic and separating over $\mathcal{A}$. Then we can define a \emph{relative modular operator}~\cite{TomitaTakesaki} as $\Delta_{\Psi |\Phi;\mathcal{A}}=S_{\Psi|\Phi;\mathcal{A}}S_{\Psi|\Phi;\mathcal{A}}^{\dagger}$, where $S_{\Psi|\Phi}$ is an anti-linear operator such that for any $\mathsf{a}\in \mathcal{A}$,
\begin{equation}
    S_{\Psi|\Phi;\mathcal{A}} \mathsf{a} |\Psi\rangle =  \mathsf{a}^{\dagger}|\Phi\rangle.
\end{equation}
Let $\mathcal{A}_1$ be an algebra. It is known that, for any algebra $\mathcal{A}_2 \subset \mathcal{A}_1$, the following inequality holds:
\begin{equation}
\Delta_{\Psi|\Phi;\mathcal{A}_2} \geq\Delta_{\Psi|\Phi;\mathcal{A}_1},\label{eq:ineq_qft}
\end{equation}
which can be found in
~\cite[Equation (2.1.3)]{Borchers2000}, and more recently,~\cite[Equation (3.36)]{witten2018aps}. This inequality makes sense only if both sides are well-defined, which requires $\Psi$ to be cyclic and separating for both $\mathcal{A}_1$ and $\mathcal{A}_2$.

To show the similarity and the difference between Eq.~\eqref{eq:ineq_qft} and Lemma~\ref{lemma:key}, let us consider the following plausible but incorrect argument to prove Theorem~\ref{thm:main}. Let $|\Psi\rangle, |\Phi\rangle \in \hilb{A} \otimes \hilb{B} \otimes \hilb{C} \otimes \hilb{D}$, where $\hilb{A}, \hilb{B}, \hilb{C}$, and $\hilb{D}$ are finite-dimensional Hilbert spaces. Let $|\Psi\rangle$ be a purification of $\rho_{AB}$ and $|\Phi\rangle$ be a purification of $\sigma_{BC}$, both assumed to be of full rank. Define the following algebras:
\begin{equation}
    \begin{aligned}
    \mathcal{A}_1 &= \{I_{A}\otimes M_{BCD}: M_{BCD}\in \mathcal{B}(\hilb{B} \otimes \hilb{C} \otimes \hilb{D}) \}, \\
    \mathcal{A}_2 &= \{I_{AB}\otimes M_{CD}: M_{CD} \in \mathcal{B}( \hilb{C} \otimes \hilb{D})\},
    \end{aligned}
\end{equation}
where $\mathcal{B}(\hilb{})$ is the space of bounded operators acting on $\hilb{}$.

If we can find $|\Psi\>$ which is cyclic and separating for both $\cA_1$ and $\cA_2$, following~\cite[Sec. 4]{witten2018aps}, the relative modular operators become
    \begin{equation}
        \Delta_{\Psi|\Phi;\mathcal{A}_2} = \rho_{AB}^{-1} \otimes \sigma_{CD}, \qquad
        \Delta_{\Psi|\Phi;\mathcal{A}_1} = \rho_{A}^{-1} \otimes \sigma_{BCD}. \label{eq:wrong-tmp}
    \end{equation}
If Eq.~\eqref{eq:wrong-tmp} is correct, we could use Eq.~\eqref{eq:ineq_qft} and take a partial trace on $D$ over both sides, obtaining
    \begin{equation}
    \rho_{AB}^{-1} \otimes \sigma_C \geq \rho_{A}^{-1}\otimes  \sigma_{BC},
    \end{equation}
which is exactly Lemma~\ref{lemma:key}.

Unfortunately, such $|\Psi\>$ does not exist in general when the Hilbert spaces are finite-dimensional.  This is because $\dim \cH_A = \dim (\cH_B \otimes \cH_C \otimes \cH_D)$ when $|\Psi\>$ is cyclic and separating for $\cA_1$ and $\dim (\cH_A \otimes \cH_B) = \dim (\cH_C \otimes \cH_D)$ when $|\Psi\>$ is cyclic and separating for $\cA_2$. The two conditions cannot simultaneously hold in general for finite-dimensional Hilbert spaces unless $\dim \cH_B = 1$. Interestingly, this issue does not arise in certain states of quantum field theory. For instance, any vacuum state is both cyclic and separating for any field algebra associated to an open set of the Minkowski space, thanks to the Reeh-Schlieder theorem \cite{witten2018aps,reeh1961bemerkungen}.
One may hope to circumvent this issue of cyclic and separating condition by considering a more general definition of relative modular operators that does not require the state to be cyclic or separating \cite[Appendix A]{ceyhan2020recovering}. However, it is then not obvious if Eq.~\eqref{eq:ineq_qft} is true because under such definition, it is not clear if $S_{\Psi|\Phi;\cA_1}$ is an extension of $S_{\Psi|\Phi;\cA_2}$.

\section{Discussion}
\label{sec:discussion}
In this paper, we proved an operator extension of weak monotonicity. It is interesting to note that our argument also leads to yet another proof of strong subadditivity~\cite{Lieb1973}. What is notable about this new proof is that the strong subadditivity is proved by first proving the weak monotonicity, not the other way around. The key observation was Lemma~\ref{lemma:key}, which followed immediately from constructions of certain isometries. We leave it as an open problem to explore the consequences of this simple but powerful observation.

\section*{Acknowledgement}
IK thanks Mark Wilde for helpful discussions. MHH thanks Marco Tomamichel for helpful discussions. TCL thanks John McGreevy, Bowen Shi, and Xiang Li for helpful discussions. We thank Geoff Penington for pointing out Ref.~\cite{ceyhan2020recovering} and Andreas Winter for helpful comments. We thank the anonymous reviewers for their helpful comments and corrections to the citations.

\bibliographystyle{unsrt}
\bibliography{references.bib}

\begin{thebibliography}{10}

\bibitem{Lieb1973}
Elliott~H. Lieb and Mary~Beth Ruskai.
\newblock Proof of the strong subadditivity of quantum‐mechanical entropy.
\newblock {\em Journal of Mathematical Physics}, 14(12):1938--1941, 1973.

\bibitem{Lowner1934}
Karl L{\"o}wner.
\newblock {\"U}ber monotone matrixfunktionen.
\newblock {\em Mathematische Zeitschrift}, 38(1):177--216, Dec 1934.

\bibitem{carlen2010trace}
Eric Carlen.
\newblock Trace inequalities and quantum entropy: an introductory course.
\newblock {\em Entropy and the quantum}, 529:73--140, 2010.

\bibitem{Borchers2000}
H.~J. Borchers.
\newblock On revolutionizing quantum field theory with tomita’s modular
  theory.
\newblock {\em Journal of Mathematical Physics}, 41(6):3604--3673, 2000.

\bibitem{witten2018aps}
Edward Witten.
\newblock Aps medal for exceptional achievement in research: Invited article on
  entanglement properties of quantum field theory.
\newblock {\em Reviews of Modern Physics}, 90(4):045003, 2018.

\bibitem{Carlen2012}
Eric~A. Carlen and Elliott~H. Lieb.
\newblock Bounds for entanglement via an extension of strong subadditivity of
  entropy.
\newblock {\em Letters in Mathematical Physics}, 101(1):1--11, Jul 2012.

\bibitem{kim2012operator}
Isaac~H Kim.
\newblock Operator extension of strong subadditivity of entropy.
\newblock {\em Journal of mathematical physics}, 53(12):122204, 2012.

\bibitem{ruskai2013remarks}
Mary~Beth Ruskai.
\newblock Remarks on kim's strong subadditivity matrix inequality: Extensions
  and equality conditions.
\newblock {\em Journal of Mathematical Physics}, 54(10):102202, 2013.

\bibitem{Fawzi2015}
Omar Fawzi and Renato Renner.
\newblock Quantum conditional mutual information and approximate markov chains.
\newblock {\em Communications in Mathematical Physics}, 340(2):575--611, Dec
  2015.

\bibitem{Wilde2015}
Mark~M. Wilde.
\newblock Recoverability in quantum information theory.
\newblock {\em Proceedings of the Royal Society A: Mathematical, Physical and
  Engineering Sciences}, 471(2182):20150338, 2015.

\bibitem{Junge2018}
Marius Junge, Renato Renner, David Sutter, Mark~M. Wilde, and Andreas Winter.
\newblock Universal recovery maps and approximate sufficiency of quantum
  relative entropy.
\newblock {\em Annales Henri Poincar{\'e}}, 19(10):2955--2978, Oct 2018.

\bibitem{Stinespring1955}
William~F. Stinespring.
\newblock Positive functions on *-algebras.
\newblock 1955.

\bibitem{accardi1982conditional}
Luigi Accardi, Carlo Cecchini, et~al.
\newblock Conditional expectations in von neumann algebras and a theorem of
  takesaki.
\newblock {\em Journal of Functional Analysis}, 45(2):245--273, 1982.

\bibitem{petz1986quasi}
D{\'e}nes Petz.
\newblock Quasi-entropies for finite quantum systems.
\newblock {\em Reports on mathematical physics}, 23(1):57--65, 1986.

\bibitem{TomitaTakesaki}
M.~Takesaki.
\newblock {\em Tomita's {{Theory}} of {{Modular Hilbert Algebras}} and Its
  {{Applications}}}, volume 128 of {\em Lecture {{Notes}} in {{Mathematics}}}.
\newblock {Springer}, {Berlin, Heidelberg}, 1970.

\bibitem{reeh1961bemerkungen}
Helmut Reeh and Siegfried Schlieder.
\newblock Bemerkungen zur unit{\"a}r{\"a}quivalenz von lorentzinvarianten
  feldern.
\newblock {\em Il Nuovo Cimento (1955-1965)}, 22(5):1051--1068, 1961.

\bibitem{ceyhan2020recovering}
Fikret Ceyhan and Thomas Faulkner.
\newblock Recovering the qnec from the anec.
\newblock {\em Communications in Mathematical Physics}, 377(2):999--1045, 2020.

\end{thebibliography}

\end{document}